\newtheorem{lemma}{Lemma}
\newcommand{\cmark}{\ding{51}}%
\newcommand{\xmark}{\ding{55}}%
\newtheorem{thm}{Theorem}
\theoremstyle{remark}
\tikzset{cross/.style={cross out, draw=black, minimum size=2*(#1-\pgflinewidth), inner sep=0pt, outer sep=0pt},
	cross/.default={6pt}}
\begin{document}
%
\title{Outage Analysis of Energy Efficiency in a Finite-Element-IRS Aided Communication System}

\author{\IEEEauthorblockN{Aaqib Bulla,} and \IEEEauthorblockN{Shahid M Shah}
\IEEEauthorblockA{\\Communication Control \& Learning Lab\\
Department of Electronics \& Communication Engineering\\
National Institute of Technology, Srinagar\\
Email: aaqib.bulla@gmail.com,shahidshah@nitsri.ac.in}}

\maketitle

\begin{abstract}
In this paper, we study the performance of an energy efficient wireless communication system, assisted by a finite-element-intelligent reflecting surface (IRS). With no instantaneous channel state information (CSI) at the transmitter, we characterize the system performance in terms of the outage probability (OP) of energy efficiency (EE). Depending upon the availability of line-of-sight (LOS) paths, we analyze the system for two different channel models, viz. Rician and Rayleigh. For an arbitrary number of IRS elements $(N)$, we derive the approximate closed-form solutions for the OP of EE, using Laguerre series and moment matching methods. The analytical results are validated using the Monte-Carlo simulations. Moreover, we also quantify the rate of convergence of the derived expressions to the central limit theorem (CLT) approximations using the \textit{Berry-Esseen} inequality. Further, we prove that the OP of EE is a strict pseudo-convex function of the transmit power and hence, has a unique global minimum. To obtain the optimal transmit power, we solve the OP of EE as a constrained optimization problem. To the best of our knowledge, the OP of EE as a performance metric, has never been previously studied in IRS-assisted wireless communication systems.
\end{abstract}


%
\IEEEpeerreviewmaketitle

\section{Introduction}\label{sec1}
 
 \begin{table}
	\begin{center}
	\resizebox{\textwidth}{!}{%
		\begin{tabular}{ | c | c | c  | c  | c  | c | c |} 
		
			\hline
			Reference & Channel Model & Rate Outage&Energy Efficiency Outage&No. of IRS elements&Convergence measure&Optimization of OP\\ 
			\hline
			\cite{kudathanthirige2020performance} & Rayleigh  & \cmark&\xmark&Infinite&\xmark&\xmark\\ 
				\hline
			\cite{9146875} & Mixed Rayleigh/Rician  & \cmark&\xmark&Infinite&\xmark&\xmark\\ 
			
			\hline
			\cite{atapattu2020reconfigurable} & Rayleigh & \cmark&\xmark& Finite&\xmark&\xmark\\ 
			\hline
			\cite{boulogeorgos2020performance} & Rayleigh & \cmark&\xmark& Finite&\xmark&\xmark\\ 
			\hline
			\cite{xu2021performance} &Rician/Nakagami-$m$ & \cmark &\xmark& Finite&\xmark&\cmark\\
			\hline
			\cite{samuh2020performance} &Nakagami-$m$ & \cmark &\xmark& Finite&\xmark&\xmark\\
				\hline
			\cite{zhang2019analysis} &Rician & \cmark &\xmark& Finite&\xmark&\cmark\\
				\hline
			\cite{guo2020outage} &Rician & \cmark &\xmark& Finite&\xmark&\cmark\\
				\hline
			\cite{9345753} &Rician& \cmark&\xmark& Finite&\xmark&\xmark\\
			\hline
			\cite{9681955} &Rayleigh & \cmark &\xmark& Finite&\xmark&\xmark\\ 
			\hline
			This work &Rician, Rayleigh& \cmark &\cmark& Finite&Berry-Esseen bound&\cmark\\
			\hline
		\end{tabular}
		}
	\end{center}
\end{table}
 The evolving generations (5G/6G) of wireless communication systems are anticipated to support over 100 billion high-speed communicating devices all over the world by 2030 \cite{strategy2015smarter2030}. Owing to massive user connectivity and ever rising demands for higher data rates, the growth of data traffic has exponentially accelerated over the past decade. It has been estimated that by 2026 the data traffic will rise to 226 exabytes per-month. This drastic progression in the communication technology encompass the challenge of tremendous power consumption. It is reported that the communication technology infrastructure is responsible for 3\% of the worldwide energy consumption and even 2\% of the $CO_2$ emissions \cite{huq2014green}. A major challenge for researchers is to achieve upto 90\% of energy savings in 5G \cite{7918597}. Due to these economical and ecological concerns, the energy efficiency (bits-per-joule) has now emerged as one of the most critical performance measures in the next generation green communication systems \cite{buzzi2016survey}-\cite{wu2017overview}. In particular, several energy efficient solutions have been proposed at the physical layer as well as at the higher layers \cite{hasan2011green}-\cite{tian2020power}. However, the performance of these approaches is still limited by the unregulated medium of propagation.To overcome such demerits the researchers are in a quest of energy efficient solutions for controlling the wireless medium.

 Among the various physical layer technologies, IRS \cite{wu2019towards,gong2020toward} has been acknowledged as a promising technology in the field of energy efficient wireless communication \cite{8741198}. This is primarily due its capability of intelligently reconfiguring the wireless medium with the help of passive reflecting elements \cite{wu2019intelligent,basar2019wireless}. This means that the incident signals can be made to propagate in a desired direction towards the intended receiver. In other words, IRS provides a virtual line-of-sight link in between the transmitter and receiver, resulting in potential signal-to-noise power gain. Specifically, IRS elements can be individually configured to control the phase shifts of incident signals, thereby, beamforming the signals without any energy consumption. Besides a minimal power consumption in the control circuitry, IRS is significantly energy efficient compared to active antenna elements and relays. In fact, the authors in \cite{8741198} proved that, with proper phase adjustments, IRS leads to higher EE as compared to the traditional relays. Furthermore, it was shown in \cite{9322510} that it is possible to achieve higher spectral and energy efficiency in an IRS assisted system without the need of expensive base station antennas. 
 
 For the case of perfect CSI several energy efficient resource allocation and beamforming methods have been studied in the literature. For example, energy efficiency maximization in IRS assisted systems with suboptimal zero-forcing beamforming was studied in \cite{8741198}. A closed form solution of the optimal transmit power for EE maximization was given in \cite{9079457}, wherein, an IRS assisted MISO system with hardware impairments was considered. It was shown in \cite{9257429} that the signal-to-interference plus noise ratio performance in an IRS aided MISO network can be significantly enhanced by jointly optimizing the beamforming vectors at the transmitter and phase shifts at the IRS. However, due to the absence of active radio frequency chain and the fact that the number of passive IRS elements are large, it is very challenging to acquire the CSI perfectly. In literature, some of the works also considered imperfect CSI scenario, for example, power efficient and robust resource allocation design with imperfect CSI was studied in \cite{yu2021irs}.  Several recent works, as listed in table 1, have analyzed the performance of IRS assisted systems by exploiting the statistical CSI. In \cite{kudathanthirige2020performance}, approximate closed form expressions for rate OP, average symbol rate and achievable diversity order has been studied for Rayleigh fading case. It was shown that the given metrics are potentially accurate for large number of IRS elements. For mixed Rayleigh and Rician fading channels, ergodic capacity and rate OP was analyzed in \cite{9146875}. It was found that the IRS contributes a signal-to-noise ratio gain of $N^2$. However, due the central limit theorem (CLT) approach, these methods are applicable only when the number of IRS are sufficiently large. With arbitrary $N$, the performance of IRS aided systems over Rayleigh fading channels was studied in \cite{atapattu2020reconfigurable,boulogeorgos2020performance}. Also, in \cite{9345753}, the authors obtained the approximate closed form expressions for rate OP, average symbol error probability and average channel capacity in case of Rician fading channels. It was shown that the results based on the CLT approximations deviate from the actual behavior, especially for lower values of $N$, emphasizing the need to analyze the system for a finite number of IRS elements. Recently, in \cite{9681955},  Chernoff upper bound on the OP was derived and it was shown that the obtained bound is tighter as compared to CLT approximations. \\Although, a remarkable research work has been presented in the above mentioned papers, none of these have analyzed the OP of EE as a performance measure in the IRS assisted systems. 
 With this motivation and the method of analysis in \cite{BULLA2021}, in this paper, we aim to study the performance of an energy efficient communication system, assisted by a finite-element-IRS in the downlink. However, different from the conventional approaches, we define OP of energy efficiency as the performance metric of the system. Measured in Bits/Hz/Joule, we define EE as ratio of the achievable rate to the total power consumed at transmitter and IRS. The main contribution of this work is summarized as follows:
 \begin{itemize}
 	\item We obtain the approximate closed-form expressions for the OP of EE in both LOS and Non-LOS scenarios for an arbitrary number of IRS elements.
 	\item Next, the rate of convergence of these expressions to the CLT approximations is evaluated using the \textit{Berry-Esseen} inequality. The resulting convergence rate, being a function of $N$, quantifies the deviation of CLT approximations from the derived analytical solutions in terms of the number of IRS elements.
 	\item Further, using the derived expressions, we prove that the OP of EE is a strict pseudo-convex function of transmit power $(p)$. Therefore, in contrast to the conventional rate outage metric, it has a unique global minimum over $p\in (0,P_{max})$ at the critical point. With power budget as the constraint, we formulate the minimization of OP as a constrained optimization problem.
 \end{itemize}
 
 The summary of this paper is as follows: In section 2, we discuss the system model and formulate the problem of OP of EE. In section 3, we obtain the approximate closed form expressions for the OP of EE in LOS and Non-LOS scenarios and also compute the corresponding Berry-Esseen bounds. In section 4, we discuss the optimization of OP of EE. The simulation and analytical results are discussed in section 5. Finally, we conclude the paper in section 6.
\section{System model and Problem formulation}\label{sec2}
We consider an IRS assisted downlink communication system consisting of a single antenna transmitter $T_X$ and a single antenna receiver $R_X$. In this setup, the signal received at the destination can be expressed as follows \cite{basar2019wireless}:
\begin{equation}
r=\Bigg[\sum_{i=1}^{N}h_i e^{j\phi_i}g_i\Bigg]s+n
\end{equation}
where:
\begin{itemize}
	\item $s$ is the transmitted data symbol, with average power $p$.
	\item $h_i=\alpha_i e^{j\nu_i}$ denotes the channel coefficient between the $T_X$ and the $i^{th}$ IRS element having amplitude $\alpha_i$ and phase $\nu_i$.
	\item $g_i=\beta_i e^{j\psi_i}$ denotes the channel coefficient between the $i^{th}$ IRS element and the $R_X$ having amplitude $\beta_i$ and phase $\psi_i$.
	\item $\phi_i$ denotes the phase shift induced by the $i^{th}$ reflecting element.
	\item $n\sim \mathcal{N}(0,N_0)$ models the additive
	white Gaussian noise (AWGN) at the receiver.
\end{itemize} 
Now, depending on how the three nodes ($T_X, R_X$ and IRS) are geographically deployed, we can have different LOS and Non-LOS channel conditions. In particular, we consider the following two cases:
\begin{enumerate}
	\item LOS channels between $T_X$-IRS and IRS-$R_X$.
	\item Non-LOS channels between $T_X$-IRS and IRS-$R_X$.
\end{enumerate}

\subsection{Problem formulation}
As shown in \cite{basar2019wireless}, the maximum end-to-end SNR in the above mentioned IRS-assisted communication link is given by :
\begin{equation}
\zeta=\frac{p}{N_0}\Bigg(\sum_{i=1}^{N}\alpha_i\beta_i\Bigg)^2=\bar{\gamma}\Bigg(\sum_{i=1}^{N}\alpha_i\beta_i\Bigg)^2
\end{equation}
where, $\bar{\gamma}$ denotes the average SNR.\\
Therefore, the energy efficiency of the given system can be defined as follows:
\begin{equation}
\eta_{EE}=\frac{\log_2\left(1+\bar{\gamma}\{\sum_{i=1}^{N}\alpha_i\beta_i\}^2\right)}{p+P_c+P_{IRS}}
\end{equation} 
where, $P_c$ is the static circuit power dissipated at the transmitter during various signal processing operations and $P_{IRS}$ is the hardware static power dissipated at the IRS.

Now, in order to characterize the system performance, we define the OP of EE as a performance measure. Corresponding to a given target value of EE $(\eta_{th})$, the OP of EE is given as follows:
\begin{equation}
P_{out}(\eta_{th})=P(\eta_{EE} < \eta_{th})
= P\left[\frac{\log_2\left(1+\zeta\right)}{P_{T}} < \eta_{th}\right]
\end{equation}
where, $P_T=p+P_c+P_{IRS}$.\\
A physical interpretation of eq (4) is that, over a unit bandwidth, the channel is allowing a data rate of $\log_2(1+\zeta)$ bits/s per unit of energy consumed at the transmitter. Thus, a reliable and energy efficient communication is possible if this rate per joule exceeds the target EE. Otherwise, the system is said to be in outage. The optimum power level that minimizes this outage represents the best trade-off in-between the achievable rate and saving the energy consumption. 

For the given setup, the rate OP for a given target rate $R_{th}$ can be written as:
\begin{align}
P_{out}(R_{th})&=P\left[\log_2\left(1+\zeta\right) < R_{th}\right]\nonumber\\
&=P[\zeta<2^{R_{th}}-1]
\end{align}
Also. eq (4) can be re-written as:
\begin{equation}
P_{out}(\eta_{th})=P[\zeta<2^{\eta_{th}P_T}-1]
\end{equation}
Corresponding to eq (5) and eq (6), there are two power profiles as given below:
\begin{align}
P^R&=\{p:\zeta<2^{R_{th}}-1\}\\
P^{EE}&=\{p:\zeta<2^{\eta_{th}P_T}-1\}
\end{align}

From eq. (7) and (8) it is clear that, in contrast to the rate outage metric, the power profile corresponding to the OP of EE depends on the product of the threshold value and the total power consumption. Hence, different approaches need to be followed for the analysis and optimization of the two metrics. Contrary to the rate OP, which is typically minimized by using the maximum available transmit power, the OP of EE exhibits a unimodal behavior. This is illustrated by plotting both the rate OP and OP of EE as a function of transmit power in Figure 1.
\begin{figure}[h]
	\centering
	\includegraphics[width=.858\columnwidth]{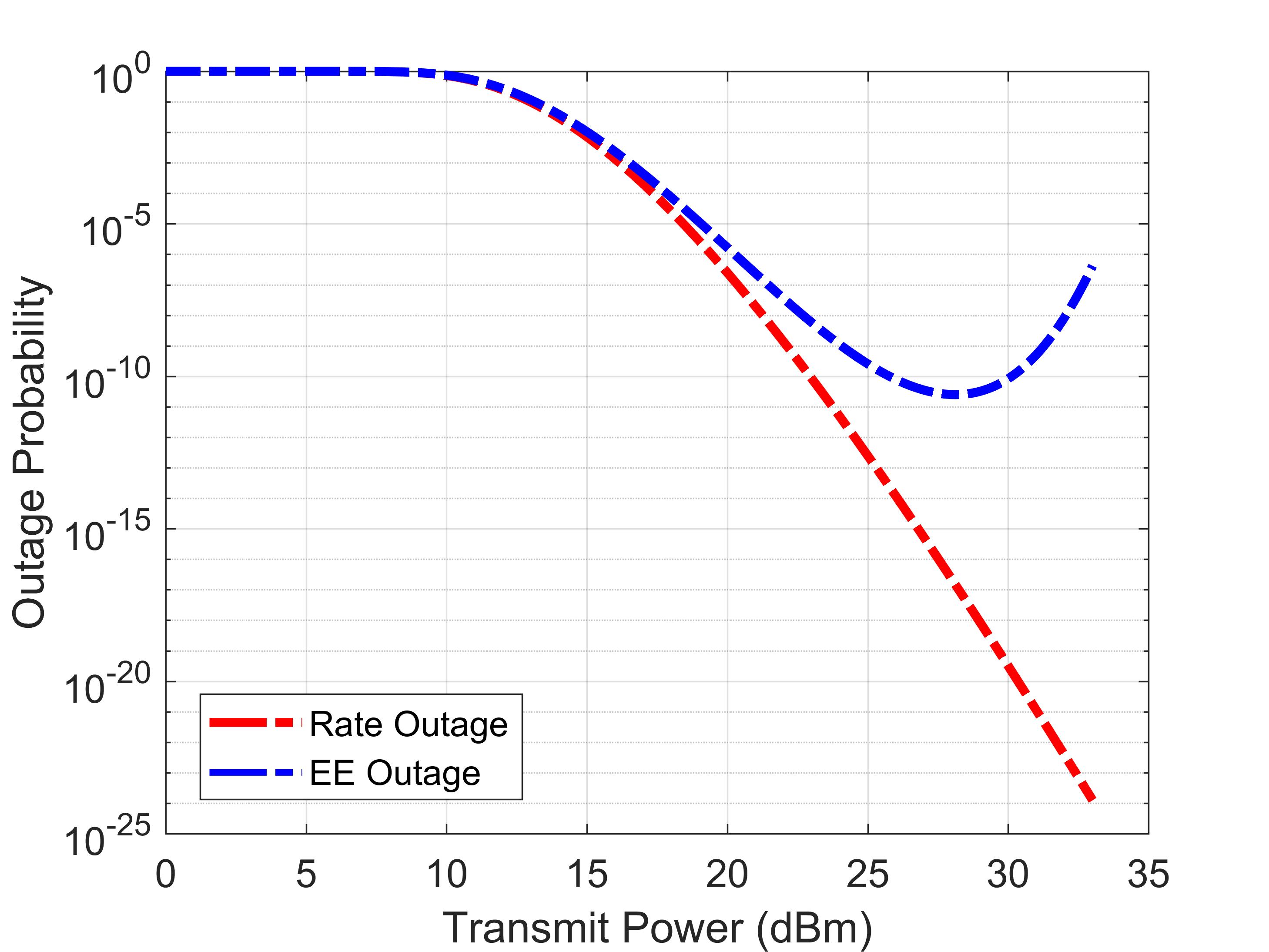}
	\caption{Rate outage and EE outage as a function of transmit power}
\end{figure}
 As evident, the power level that minimizes the OP of EE is essentially lower than the maximum available transmit power. \\
Now, eq (4) can be written as: 
\begin{equation} \nonumber
\begin{split}
P_{out}(\eta_{th})= P&\left(\bigg\{\sum_{i=1}^{N}\alpha_i\beta_i\bigg\}^2 < \frac{\{2^{(P_T)\eta_{th}}\}-1}{\bar{\gamma}}\triangleq Q\right)\\
=P&\left(\bigg\{\sum_{i=1}^{N}\alpha_i\beta_i\bigg\} < \sqrt{Q}\right)
=F_X(\sqrt{Q})
\end{split}
\end{equation}
where, $X=\sum_{i=1}^{N}\alpha_i\beta_i=\sum_{i=1}^{N}x_i$\\
This implies that we need to compute the cumulative distribution function (CDF) of $X$ in order to quantify the OP of energy efficiency for the given system. Clearly, the CDF of $X$ will vary according to the distribution of channel coefficients $\alpha_i$, $\beta_i$ for the above mentioned cases.
\section{Outage Probability of Energy Efficiency}
\subsection{Case 1: LOS channels between $T_X$-IRS and IRS-$R_X$.}
The IRS is preferably deployed in such a way that an LOS path is established between the $T_X$ and IRS as well as between the IRS and $R_X$. In such a case, the envelops ($\alpha_i$, $\beta_i$) of channel coefficients are conventionally modeled using independent and identically distributed (IID) Rician random variables, with probability density function (PDF) as given below:
\begin{equation}
f(u)=\frac{u}{\hat{\sigma}}\exp{\Big(-\frac{u^2+A^2}{2\hat{\sigma}^2}\Big)}I_0\Big(\frac{uA}{\hat{\sigma}^2}\Big) ; u>0
\end{equation}
 where, $\hat{\sigma}^2$ denotes power of NLOS components and $A^2$ denotes power of LOS component. Let $K_1$, $K_2$ and $\Omega_1$, $\Omega_2$ represent the corresponding shape and scale parameters of the Rician variables $\alpha_i$ and $\beta_i$ respectively. Assuming that $\alpha_i$ and $\beta_i$ are independent, the PDF of $x_i=\alpha_i\beta_i$ is therefore given by \cite{bhargav2018product}:
\begin{equation}
f_{x_i}(y)=\sum_{j=0}^{\infty} \sum_{i=0}^{\infty} \frac{\left(4 K_{2}^{i} K_{1}^{j}\left(\Omega_{1} \Omega_{2}\right)^{\frac{(i+j+2)}{2}}\right) K_{j-i}\left(2 y \sqrt{\Omega_{1} \Omega_{2}}\right)}{(y^{-(i+j+1)})(i !)^{2}(j !)^{2} \exp \left(K_{1}+K_{2}\right)}
\end{equation}
where, $K_v(.)$ is the modified $v$-order Bessel function of the
second kind. Also, the mean and variance of $x_i$ are given by \cite{bhargav2018product}:
\begin{align}\nonumber
E\left[x_i\right]=& \frac{\pi e^{\frac{\left(-K_{1}-K_{2}\right)}{2}}\sqrt{\Omega_{1}\Omega_{2}}}{4 \sqrt{(K_1+1)(K_2+1)}}J_1J_2 \nonumber\\
Var\left[x_i\right]=& \frac{\left[16\left(K_{1}+1\right)\left(K_{2}+1\right)-\pi^{2} e^{-K_{1}-K_{2}}(J_1^2J_2^2)\right]}{16 \Omega_{1} \Omega_{2}}\nonumber
\end{align}
where, $J_1=\left[\left(K_{1}+1\right) I_{0}\left(\frac{K_{1}}{2}\right)+K_{1} I_{1}\left(\frac{K_{1}}{2}\right)\right]$, $J_2=\left[\left(K_{2}+1\right) I_{0}\left(\frac{K_{2}}{2}\right)+K_{2} I_{1}\left(\frac{K_{2}}{2}\right)\right]$ and $I_v(.)$ is the modified $v$-order Bessel function of the first kind.
\begin{thm}
	The OP of EE, when $\alpha_i$ and $\beta_i$ are assumed to be rician distributed, is given by:
	\begin{equation}
	P\big(X<\sqrt{Q}\big)=F_{X}\big(\sqrt{Q}\big)=\frac{1}{\Gamma(a)}\gamma\bigg(a,\frac{1}{b}\sqrt{Q}\bigg)
	\end{equation}
	where, $\gamma(.,.)$ is the lower incomplete gamma function and \\
	$a=\frac{N[E(x_i)]^2}{Var(x_i)}$, $b= \frac{Var(x_i)}{E(x_i)}$
\end{thm}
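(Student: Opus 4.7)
The plan is to approximate the distribution of $X = \sum_{i=1}^{N} x_i$ by a Gamma distribution and then use the well-known closed-form CDF of the Gamma law. The motivation for this approach is that the exact PDF of $x_i = \alpha_i \beta_i$, given in eq.~(10) as an infinite double series of modified Bessel functions, does not admit a tractable $N$-fold convolution. However, $X$ is a sum of i.i.d.\ nonnegative random variables, so the Laguerre-series / moment-matching framework alluded to in the abstract suggests fitting a two-parameter Gamma law to $X$ by matching its first two moments.

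First, I would exploit independence and the identical distribution of the $x_i$ to write
$$E[X] = N\, E[x_i], \qquad \mathrm{Var}[X] = N\, \mathrm{Var}[x_i],$$
with $E[x_i]$ and $\mathrm{Var}[x_i]$ taken directly from the expressions displayed immediately before the theorem statement. Next, recalling that a Gamma random variable with shape $a$ and scale $b$ has mean $ab$ and variance $ab^2$, I would impose
$$ab = N\, E[x_i], \qquad ab^2 = N\, \mathrm{Var}[x_i].$$
Solving this $2\times 2$ system gives $b = \mathrm{Var}(x_i)/E(x_i)$ and $a = N[E(x_i)]^2/\mathrm{Var}(x_i)$, which are exactly the parameters in the statement.

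With $X \approx \mathrm{Gamma}(a,b)$, the CDF evaluated at $y$ is $F_X(y) = \gamma(a, y/b)/\Gamma(a)$. Substituting $y = \sqrt{Q}$ yields the claimed formula, and combining with the reduction carried out just before Section~III ($P_{\text{out}}(\eta_{th}) = F_X(\sqrt{Q})$) completes the derivation.

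The main obstacle is not the algebra, which is straightforward moment matching, but the \emph{justification} that a two-parameter Gamma law faithfully reproduces the true law of $X$ for arbitrary $N$, especially small $N$. For large $N$ a CLT-type argument is available but yields a Gaussian tail, which is inappropriate for a nonnegative quantity; the Gamma fit is sharper because it respects the support and matches two moments. A rigorous bound on the approximation error would require expanding the true density in Laguerre polynomials (orthogonal with respect to the Gamma weight) and controlling the tail coefficients — precisely the program the paper undertakes separately via the Berry--Esseen analysis promised in the abstract. For the purpose of Theorem~1 itself, I would state the result as an \emph{approximate} closed-form and defer the accuracy quantification to the convergence-rate section.
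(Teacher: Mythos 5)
Your proposal matches the paper's own proof: the paper approximates the density of $X$ by the first term of a Laguerre series expansion, which is exactly the Gamma density with shape $[E(X)]^2/\mathrm{Var}(X)$ and scale $\mathrm{Var}(X)/E(X)$, i.e.\ the moment-matched Gamma you construct, and then integrates to obtain the lower incomplete gamma CDF. Your explicit solution of the $2\times 2$ moment system and your remark that the result is an approximation whose accuracy is deferred to the Berry--Esseen section are consistent with (indeed slightly more careful than) the paper's presentation.
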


\begin{proof}
	We have $X=\sum_{i=1}^{N}\alpha_i\beta_i=\sum_{i=1}^{N}x_i$. Since $X$ is the sum of $N$ IID random variables, according to \cite{primak2005stochastic}, its PDF can be tightly approximated by the first term of Laguerre series expansion as following:
	\begin{equation}
	f_{X}(x)=\frac{x^{a-1}}{b^{a} \Gamma(a)} \exp \left(-\frac{x}{b}\right)
	\end{equation}
	where, $a$ and $b$ are related to the mean and variance of $X$ as $a=\frac{[E(X)]^2}{Var(X)}$ and $b= \frac{Var(X)}{E(X)}$. Now, using the fact that $x_i$ are independent and that $E[.]$ is a linear operator, we can write $E(X)=NE(x_i)$ and $Var(X)=NVar(x_i)$. Now, from (12), $P_{out}(\eta_{EE})$ can be easily computed as:
	\begin{align}
	P_{out}(\eta_{EE})&=F_X\big(\sqrt{Q}\big)=\hspace{-0.3em}\int_{0}^{\sqrt{Q}}\hspace{-0.5em}f_X(x)dx=\frac{\gamma(a,\frac{1}{b}\sqrt{Q})}{\Gamma(a)}
	\end{align}
\end{proof}
\subsubsection{Large $N$}
When the number of IRS elements are sufficiently large, i.e, $N \rightarrow \infty$, then by CLT, $X$ can be approximated by a Gaussian random variable with mean $\mu=N[E(x_i)]$ and variance $\sigma^2=NVar(x_i)$. Therefore, the CDF of $X$ can be wrintten as:
\begin{equation}
\Psi_{X}(x)=\Phi_X\bigg(\frac{x-\mu}{\sigma}\bigg)=\frac{1}{2}\left(1+\operatorname{erf}\left[\frac{x-\mu}{\sqrt{2 \sigma^2}}\right]\right)
\end{equation}
where, $\Phi_X(.)$ denotes the CDF of standard gaussian random variable. The $P_{out}(\eta_{EE})$ is therefore given by:
\begin{equation}
P(X<\sqrt{Q})=\Psi_X(\sqrt{Q})=\Phi_X\bigg(\frac{\sqrt{Q}-\mu}{\sigma}\bigg)
\end{equation}

\subsection{Case 2: Non-LOS channels between $T_X$-IRS and IRS-$R_X$.}
Now, let's consider a scenario when the direct LOS paths are unavailable due to various blockage effects. In such a case, we assume that all channels are IID complex Gaussian fading with zero mean and unit variance, i.e, $h_i$,$g_i$ $\sim$ $\mathcal{CN}(0,1)$  
It follows that $\alpha_i$ and $\beta_i$ are Rayleigh distributed with PDF as given below:
\begin{equation}
f(v)=\frac{v}{\tilde{\sigma}}\exp(-\frac{v^2}{2\tilde{\sigma}}) ; v>0
\end{equation}
where, $\tilde{\sigma}$ denotes the power of multi-path components.  Let $\hat{\rho_{\alpha}}=\hat{\rho_{\beta}}=\frac{1}{\sqrt{2}}$ denote the corresponding scale parameters of the Rayleigh variables $\alpha_i$ and $\beta_i$. Again, to evaluate the OP of EE, we need to compute the CDF of $X$ for the given case.
\begin{thm}
	The OP of EE, when $\alpha_i$ and $\beta_i$ are assumed to be Rayleigh distributed, is given by:
	\begin{equation}
	P(X<\sqrt{Q})=F_X\big(\sqrt{Q}\big)=\frac{1}{\Gamma(Nk)}\gamma\bigg(Nk,\frac{1}{\theta}\sqrt{Q}\bigg)
	\end{equation}
	where; $k=\frac{\pi^2}{16-\pi^2}$ and $\theta=\frac{16-\pi^2}{4\pi}$
\end{thm}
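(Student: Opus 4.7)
The plan is to mirror the proof strategy of Theorem 1, but with the moments recomputed for the Rayleigh setting. Since $X = \sum_{i=1}^N \alpha_i \beta_i = \sum_{i=1}^N x_i$ is again a sum of $N$ i.i.d.\ nonnegative random variables, I can invoke the same single-term Laguerre-series approximation from \cite{primak2005stochastic} to write
\begin{equation*}
f_X(x) \approx \frac{x^{a-1}}{b^{a}\,\Gamma(a)}\exp\!\left(-\frac{x}{b}\right),
\qquad a=\frac{[E(X)]^2}{Var(X)},\quad b=\frac{Var(X)}{E(X)}.
\end{equation*}
By independence of the $x_i$ and linearity of expectation, $E(X)=N\,E(x_i)$ and $Var(X)=N\,Var(x_i)$, so $a=N[E(x_i)]^2/Var(x_i)$ and $b=Var(x_i)/E(x_i)$ as in the Rician case. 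All that remains is to evaluate these in closed form under the Rayleigh assumption and then integrate $f_X$ from $0$ to $\sqrt{Q}$ to recover the lower incomplete gamma function.

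Next I would compute the mean and variance of $x_i=\alpha_i\beta_i$. For a Rayleigh variable with scale parameter $\hat{\rho}_\alpha = 1/\sqrt{2}$ the standard formulas give $E[\alpha_i]=\hat{\rho}_\alpha\sqrt{\pi/2}=\sqrt{\pi}/2$ and $E[\alpha_i^2]=2\hat{\rho}_\alpha^2=1$, and identically for $\beta_i$. Using independence,
\begin{equation*}
E[x_i]=E[\alpha_i]\,E[\beta_i]=\tfrac{\pi}{4},\qquad
E[x_i^2]=E[\alpha_i^2]\,E[\beta_i^2]=1,
\end{equation*}
so $Var(x_i) = 1 - \pi^2/16 = (16-\pi^2)/16$. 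Substituting into the formulas for $a$ and $b$ gives
\begin{equation*}
a = N\cdot\frac{(\pi/4)^2}{(16-\pi^2)/16} = \frac{N\pi^2}{16-\pi^2} = Nk,
\qquad
b = \frac{(16-\pi^2)/16}{\pi/4} = \frac{16-\pi^2}{4\pi} = \theta,
\end{equation*}
matching the constants stated in the theorem.

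To finish, I would simply integrate the gamma density:
\begin{equation*}
F_X(\sqrt{Q}) = \int_{0}^{\sqrt{Q}} \frac{x^{Nk-1}}{\theta^{Nk}\,\Gamma(Nk)}\exp\!\left(-\frac{x}{\theta}\right)dx = \frac{1}{\Gamma(Nk)}\,\gamma\!\left(Nk,\tfrac{1}{\theta}\sqrt{Q}\right),
\end{equation*}
which is obtained by the substitution $t=x/\theta$ and the definition of the lower incomplete gamma function. This yields $P(X<\sqrt{Q})=F_X(\sqrt{Q})$ as required.

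The only nontrivial step is the moment computation for the product of two Rayleigh variables; the Laguerre-series approximation and the final integration are purely mechanical once $a$ and $b$ have been identified. I do not anticipate any genuine obstacle, since independence of $\alpha_i$ and $\beta_i$ makes the product moments factor cleanly. The interpretational subtlety, worth flagging in the write-up, is that the paper's Rayleigh PDF is written with $\tilde{\sigma}$ in the exponent rather than $\tilde{\sigma}^2$; one must be careful to use the stated scale $\hat{\rho}_\alpha=\hat{\rho}_\beta=1/\sqrt{2}$ (consistent with $h_i,g_i\sim\mathcal{CN}(0,1)$) so that $E[\alpha_i^2]=E[\beta_i^2]=1$, otherwise the final constants $k$ and $\theta$ will not come out right.
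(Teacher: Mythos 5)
Your proposal is correct and arrives at exactly the constants $k=\frac{\pi^2}{16-\pi^2}$, $\theta=\frac{16-\pi^2}{4\pi}$ and the CDF $\frac{1}{\Gamma(Nk)}\gamma\big(Nk,\frac{1}{\theta}\sqrt{Q}\big)$, but you place the approximation at a different point than the paper does. The paper's proof first moment-matches each \emph{individual} product $x_i=\alpha_i\beta_i$ to a Gamma$(k,\theta)$ random variable (solving $k\theta=\pi/4$, $k\theta^2=1-\pi^2/16$), and then uses the \emph{exact} fact that a sum of $N$ i.i.d.\ Gamma$(k,\theta)$ variables with common scale is Gamma$(Nk,\theta)$. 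You instead reuse the Theorem~1 machinery: apply the single-term Laguerre (equivalently, Gamma moment-matching) approximation directly to the aggregate $X$, with $a=N[E(x_i)]^2/Var(x_i)$ and $b=Var(x_i)/E(x_i)$. The two routes coincide formally because matching the first two moments of the sum yields shape $N E^2/V$ and scale $V/E$, identical to summing $N$ per-term matched Gammas; so the final expression is the same. What the paper's decomposition buys is a cleaner separation of the (single) approximation step from an exact additivity argument, and it makes transparent why the shape parameter scales linearly in $N$ while the scale $\theta$ is $N$-independent; what your route buys is uniformity with the Rician case, treating both theorems as instances of one approximation applied to the sum. Your moment computations ($E[x_i]=\pi/4$, $Var(x_i)=1-\pi^2/16$) agree with the values the paper quotes from its reference, and your remark about the $\tilde{\sigma}$ versus $\tilde{\sigma}^2$ normalization in the stated Rayleigh PDF is a legitimate caveat that the paper glosses over.
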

\begin{proof}
	The PDF of the product of two i.i.d.
	Rayleigh random variables $(x_i=\alpha_i\beta_i)$ is given by \cite{salo2006distribution}:
	\begin{equation}
	f_{x_i}(y)=\frac{y}{\hat{\rho}^2}K_0\bigg(\frac{y}{\hat{\rho}}\bigg)
	\end{equation}
	where, $K_0$ is the zeroth order modified Bessel function of second kind and $\hat{\rho}=\hat{\rho_{\alpha}}\hat{\rho_{\beta}}$. Also, the mean and variance of $x_i$ are given by; $E(x_i)=\frac{\pi}{4}$ and $Var(x_i)=1-\frac{\pi^2}{16}$. Now, using the moment matching technique \cite{6059452}, we can use the Gamma distribution to approximate the distribution of $x_i$. A Gamma random variable with shape parameter $k$ and scale parameter $\theta$ has mean $k\theta$ and variance $k\theta^2$. Therefore, on matching the given moments as shown below:
	\begin{equation}
	\begin{aligned}
k\theta=\frac{\pi}{4}\\
k\theta^2=1-\frac{\pi^2}{16}
	\end{aligned}
	\end{equation}
	the CDF of $x_i$ can be written as:
	\begin{equation}
	F_{x_i}(y)=\frac{\gamma\big(k,\frac{1}{\theta}y\big)}{\Gamma(k)}
	\end{equation} 
	where, $k=\frac{\pi^2}{16-\pi^2}$ and $\theta=\frac{16-\pi^2}{4\pi}$. Now, $X$ is the sum of $N$ IID Gamma random variables with parameters $k$ and $\theta$. Therefore random variable $X$ is also Gamma distributed with parameters $Nk$ and $\theta$. Hence $P_{out}$ is given by (17).
\end{proof}
\subsubsection{Large $N$}
Again, if the number of IRS elements are sufficiently large, i.e, $N \rightarrow \infty$, according to CLT, $X$ converges to a Gaussian distributed random variable. Therefore, following the similar notation as in case 1, the $P_{out}(\eta_{EE})$ is given by:
\begin{equation}
P\big(X<\sqrt{Q}\big)=\Psi_X(\sqrt{Q})=\Phi_X\bigg(\frac{\sqrt{Q}-\mu}{\sigma}\bigg)
\end{equation}
In this case, $\mu=N\big[\frac{\pi}{4}\big]$ and $\sigma^2=N[1-\frac{\pi^2}{16}]$. 
\subsection{Berry Esseen bound}
As mentioned, the approximations in (15) and (21) are valid for sufficiently large number of reflecting elements. Now, in order to precisely quantify the effect of $N$ over the rate of convergence of CDF of $X$ to the Gaussian CDF we invoke the Berry-Esseen inequality \cite{durrett2019probability} as follows:\\
\begin{equation}
\begin{split}
&\bigg|F_{X}\bigg(\frac{\sqrt{Q}-\mu}{\sigma}\bigg)-\Phi_X\bigg(\frac{\sqrt{Q}-\mu}{\sigma}\bigg)\bigg|\leq \frac{c\rho}{\{Var(x_i)\}^{3/2}\sqrt{N}}\\
\end{split}
\end{equation}
where, $\rho=E[\{x_i-E[x_i]\}^3]=E[x_i^3]-3E[x_i]E[x_i^2]+3[E[x_i]]^3$ and the best known value of $c$ is 0.56 \cite{shevtsova2010improvement}. \\
\begin{itemize}
	\item For case 1 $E[x_i^k]$ is given by \cite{bhargav2018product}:
	\begin{align}
	E[x_i^k]&=\frac{\hat{r}_1^k\hat{r}_2^k[\Gamma(1+\frac{k}{2})]^2e^{-K_1-K_2}}{[(1+K_1)(1+K_2)]^{\frac{k}{2}}}\nonumber\\ &\times{}_{1}F_{1}\bigg(1+\frac{k}{2};1;K_1\bigg){}_{1}F_{1}\bigg(1+\frac{k}{2};1;K_2\bigg)
	\end{align}
	where, ${}_{1}F_{1}(.;.;.)$ denotes the confluent hypergeometric function and $\hat{r}_1=\sqrt{E[\alpha_i^2]}$, $\hat{r}_2=\sqrt{E[\beta_i^2]}$.
	\item For case 2 $E[x_i^k]$ is given by \cite{salo2006distribution}: 
	\begin{equation}
	E[x_i^k]=(4\rho^4)^\frac{k}{2}\bigg[\Gamma\bigg(\frac{k}{2}+1\bigg)\bigg]^2
	\end{equation}
\end{itemize}
As evident from eq. (22), the approximation error is directly proportional to $\frac{1}{\sqrt{N}}$ and is upper bounded by $\mathcal{O}(\frac{1}{\sqrt{N}})$, $N$ being the number of IRS elements.

\section{Optimization of OP}
The closed form expressions for the OP of EE for case 1 and 2 are given by eq (11) and eq (17) respectively. When plotted with respect to (w.r.t) the transmitted power, we observe that OP of EE does not asymptotically go to zero, rather it increases for higher values of transmit power. This is primarily due to the dominance of the linear function in the denominator of (3) beyond a certain point. In this section, we aim to optimize the OP of EE w.r.t to the transmit power. The optimization problem can be formulated as given below:
\begin{align}
\min_{p} \frac{\gamma\big(\varphi,\frac{1}{\lambda}\sqrt{Q}\big)}{\Gamma(\varphi)}
\hspace{3em}\text{s.t:} \hspace{1em}0<p\leq P_{max}
\end{align}
where, for case 1: $\varphi=a+1$, $\lambda=b$ and for case 2: $\varphi=NK$, $\lambda=\theta$. $P_{max}$ denotes the maximum transmit power.\\
The objective function is (25) is of the form as given below:
\begin{align}
f(\omega)=g(h(\omega))
\end{align}
where, $h(\omega)=\sqrt{\frac{2^{k(\omega+c)}-1}{\omega}}$ and $g(\omega)=\frac{\gamma(\varphi,\frac{1}{\lambda}\omega)}{\Gamma(\varphi)}$

\begin{lemma}
	$h(\omega)$ in (26) is a strictly convex function of $\omega$ for $\omega>0$.
\end{lemma}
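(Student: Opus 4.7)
The plan is to reduce strict convexity of $h$ to strict convexity of its logarithm $\ell(\omega) := \ln h(\omega) = \tfrac{1}{2}\bigl[\ln(2^{k(\omega+c)}-1) - \ln \omega\bigr]$. Since $h = e^{\ell}$ and $h''(\omega) = e^{\ell}\bigl((\ell')^{2} + \ell''\bigr)$, any pointwise bound $\ell''(\omega) > 0$ forces $h''(\omega) > 0$. This route is preferable to attacking $h = \sqrt{f}$ with $f(\omega) = (2^{k(\omega+c)}-1)/\omega$ directly, which demands $2ff'' - (f')^{2} > 0$ and reduces to a polynomial inequality in $2^{k(\omega+c)}$ whose sign is not transparent.

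I would then differentiate $\ell$ twice. Writing $\beta := k\ln 2 > 0$ and $E := 2^{k(\omega+c)} > 1$, so that $E' = \beta E$, routine computation yields
\[
\ell''(\omega) \;=\; \frac{1}{2}\left[\frac{1}{\omega^{2}} - \frac{\beta^{2}\,E}{(E-1)^{2}}\right].
\]
Because $E > 1$, the claim $\ell''(\omega) > 0$ is equivalent to $E - 1 > \beta\omega\sqrt{E}$. Dividing by $\sqrt{E}$ and invoking the identity $\sqrt{E} - 1/\sqrt{E} = 2\sinh\!\bigl(\beta(\omega+c)/2\bigr)$ recasts the inequality as
\[
\sinh\!\Bigl(\tfrac{\beta(\omega+c)}{2}\Bigr) \;>\; \tfrac{\beta\omega}{2}.
\]

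Finally, I would verify this hyperbolic inequality. Set $a := \beta\omega/2 \geq 0$ and $b := \beta c/2$; since the constants $k$ (identified with $\eta_{th}$) and $c$ (identified with $P_{c}+P_{IRS}$) are strictly positive in the formulation of (25), we have $b > 0$, and the target becomes $\sinh(a+b) > a$. Strict monotonicity of $\sinh$ gives $\sinh(a+b) > \sinh(a)$, and the elementary bound $\sinh(a) \geq a$ for $a \geq 0$ then yields $\sinh(a+b) > a$. Hence $\ell''(\omega) > 0$ throughout $\omega > 0$, and strict convexity of $h$ follows. The main obstacle I anticipate is purely algebraic bookkeeping in the passage from $\ell''$ to the $\sinh$ reformulation; once the hyperbolic form is in hand, the crucial use of $c > 0$ to secure the \emph{strict} inequality is immediate, and the route via $\ln h$ sidesteps the awkward $2ff'' - (f')^{2}$ computation entirely.
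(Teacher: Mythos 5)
Your proof is correct, and it shares the paper's overall strategy of establishing convexity of $h$ through its logarithm, but the execution differs in an important way. The paper sets $z(\omega)=\log h(\omega)$ and then writes $z(\omega)=\tfrac{1}{2}\bigl[k(\omega+c)\log 2-\tfrac{1}{2}\log\omega\bigr]$, i.e.\ it silently replaces $\log\bigl(2^{k(\omega+c)}-1\bigr)$ by $k(\omega+c)\log 2$ (and carries a stray factor $\tfrac{1}{2}$ on the $\log\omega$ term); with that simplification $z$ becomes affine minus a concave term, so $z''>0$ is immediate. That replacement is only an approximation, valid when $2^{k(\omega+c)}\gg 1$, so the paper's own argument has a gap for the exact objective. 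You keep the $-1$, compute the exact second derivative $\ell''=\tfrac{1}{2}\bigl[\omega^{-2}-\beta^{2}E/(E-1)^{2}\bigr]$, and reduce its positivity to $\sinh\bigl(\beta(\omega+c)/2\bigr)>\beta\omega/2$, which follows from the monotonicity of $\sinh$ together with $\sinh(a)\ge a$; all of these steps check out. (In fact the strict inequality $\sinh(a)>a$ for $a>0$ already suffices, so the positivity of $c$ is not essential to your conclusion.) The final passage from $\ell''>0$ to $h''=e^{\ell}\bigl((\ell')^{2}+\ell''\bigr)>0$ is the same log-convexity-implies-convexity device the paper invokes. In short, your route proves the lemma as actually stated, whereas the paper's proof only establishes it for an approximated version of $h$; your version is the one that should be preferred.
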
 
\begin{proof}
	let $z(\omega)=\log\{h(\omega)\}$\\
	\begin{equation}\nonumber
	\begin{split}
	\implies z(\omega)=\frac{1}{2}\Big[k(\omega+c)\log(2)-\frac{1}{2}\log(\omega)\Big]\\
	\end{split}
	\end{equation}
	Differentiating twice w.r.t $\omega$ we have: $z''(\omega)=\frac{1}{2\omega^2}$.
	Clearly $z''(\omega)>0$ for $\omega>0$, therefore, $h(\omega)$ is a strict log-convex function of $\omega$. Hence, $h(\omega)$, being a composite function of an exponential function and a strict log-convex function is strictly convex in $\omega$. 
\end{proof}
\begin{lemma}
	The function $g(\omega)$ is a quasi-convex function of $\omega$, for $\omega>0$. 
\end{lemma}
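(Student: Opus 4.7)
The plan is to exploit the fact that $g(\omega)=\gamma(\varphi,\omega/\lambda)/\Gamma(\varphi)$ is (up to the constant $1/\Gamma(\varphi)$) simply the lower incomplete gamma function, which coincides with the CDF of a Gamma random variable with shape $\varphi$ and scale $\lambda$, evaluated at $\omega$. Since any CDF is non-decreasing, monotonicity alone will give quasi-convexity, because a real-valued function on $(0,\infty)$ whose sublevel sets $\{\omega>0:g(\omega)\le t\}$ are intervals is quasi-convex by definition, and monotone functions trivially have this property.

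Concretely, I would first differentiate under the integral sign in the definition $\gamma(\varphi,\omega/\lambda)=\int_0^{\omega/\lambda}t^{\varphi-1}e^{-t}\,dt$ and apply the fundamental theorem of calculus together with the chain rule to obtain
\begin{equation}
g'(\omega)=\frac{1}{\lambda\,\Gamma(\varphi)}\left(\frac{\omega}{\lambda}\right)^{\varphi-1}\exp\!\left(-\frac{\omega}{\lambda}\right).
\end{equation}
Since $\varphi>0$ and $\lambda>0$ in both Case 1 ($\varphi=a+1$, $\lambda=b$) and Case 2 ($\varphi=Nk$, $\lambda=\theta$), every factor on the right-hand side is strictly positive for $\omega>0$, so $g'(\omega)>0$ and $g$ is strictly increasing on $(0,\infty)$.

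Finally, I would close the argument by invoking the definition of quasi-convexity directly: for any $\omega_1,\omega_2>0$ and any $t\in[0,1]$, assume without loss of generality that $\omega_1\le\omega_2$, so that the convex combination satisfies $t\omega_1+(1-t)\omega_2\le\omega_2$. By the strict monotonicity established above,
\begin{equation}
g\bigl(t\omega_1+(1-t)\omega_2\bigr)\le g(\omega_2)=\max\{g(\omega_1),g(\omega_2)\},
\end{equation}
which is exactly the definition of quasi-convexity. Equivalently, every sublevel set $\{\omega>0:g(\omega)\le t\}$ is an interval of the form $(0,c]$ (possibly empty or all of $(0,\infty)$), and intervals are convex.

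There is no real obstacle here; the only mild subtlety is making sure that the interpretation of $g$ as a CDF is legitimate in both cases, i.e.\ that the parameters $\varphi$ and $\lambda$ arising from the Laguerre-series fit (Case 1) and the moment-matching fit (Case 2) are both strictly positive, so that the sign of $g'(\omega)$ is unambiguous. Once that is noted, the statement follows immediately. Note that stronger conclusions such as (log-)convexity of $g$ are \emph{not} available in general, which is why only quasi-convexity is claimed; this is sufficient input for combining with Lemma~1 (strict convexity of $h$) in the next step of the optimization argument, via standard composition rules for quasi-convex functions.
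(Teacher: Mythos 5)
Your proposal is correct and follows essentially the same route as the paper: both arguments observe that $g$ is (a constant multiple of) the Gamma CDF, hence monotonically increasing, and that monotonicity immediately yields the quasi-convexity inequality $g(t\omega_1+(1-t)\omega_2)\le\max\{g(\omega_1),g(\omega_2)\}$. Your explicit computation of $g'(\omega)$ and the remark that $\varphi,\lambda>0$ in both cases merely make the paper's one-line monotonicity claim more rigorous.
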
 
\begin{proof}
	Since $g(\omega)$ is the CDF of gamma distribution, it is a monotonically increasing function of $\omega$. Therefore, $\forall t\in[0,1]$ the relation:
	\begin{equation}
	g(t\omega_1+(1-t)\omega_2)\leq \max [g(\omega_1), g(\omega_2)]
	\end{equation}
	is true. Hence, according to (\cite{stancu2012fractional}, def. 2.2.2) it is a quasi-convex function of $\omega$.
\end{proof}
\begin{thm}
	The objective function in (25) has a unique global minimum over $p\in(0, P_{max})$.
\end{thm}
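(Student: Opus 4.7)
The plan is to combine Lemmas 1 and 2 with the additional observation that $g$ is in fact strictly monotonically increasing on $(0,\infty)$ (not merely quasi-convex), since it is the CDF of a gamma distribution whose density is positive on $(0,\infty)$. Under this upgrade, minimizers of $f(\omega)=g(h(\omega))$ coincide with minimizers of $h(\omega)$, reducing the problem to finding the unique minimizer of $h$ on the feasible interval.

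First, I would analyse the boundary behaviour of $h(\omega)=\sqrt{(2^{k(\omega+c)}-1)/\omega}$. As $\omega\to 0^+$, the numerator tends to $2^{kc}-1>0$ while $1/\omega\to\infty$, so $h(\omega)\to\infty$. As $\omega\to\infty$, the exponential term overwhelms the linear denominator and $h(\omega)\to\infty$ as well. Together with Lemma 1 (strict convexity on $(0,\infty)$) and continuity, this pins down a unique interior stationary point $\omega^{*}\in(0,\infty)$ satisfying $h'(\omega^{*})=0$, at which $h$ attains its global minimum.

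Second, I would lift this back to $f$ by verifying strict pseudo-convexity on $(0,P_{max})$. Writing $f'(\omega)=g'(h(\omega))\,h'(\omega)$ and using $g'>0$ on $(0,\infty)$, the zeros of $f'$ coincide with those of $h'$. By strict convexity of $h$, its derivative $h'$ is strictly increasing and changes sign exactly once, at $\omega^{*}$; hence $f$ is strictly decreasing on $(0,\omega^{*})$ and strictly increasing on $(\omega^{*},\infty)$. For a $C^{1}$ function of a single variable this is precisely strict pseudo-convexity, so any critical point on the feasible set is automatically the unique global minimum. Assuming the operational regime $\omega^{*}\in(0,P_{max})$ — the case of practical interest, as illustrated in Fig.\ 1 — we conclude that $\omega^{*}$ is the unique minimizer of $f$ on $(0,P_{max})$.

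The main obstacle, and the only point requiring care, is the implicit feasibility condition $\omega^{*}<P_{max}$: if $P_{max}$ were smaller than $\omega^{*}$, then $f$ would be strictly decreasing on the entire feasible set and the minimum would be pushed to the boundary rather than attained at an interior critical point. In the energy-efficient regime that motivates this work the power budget is always large enough to contain the stationary point, so this is a mild standing assumption on $P_{max}$ rather than a genuine analytic difficulty; all the hard work has already been done in Lemmas 1 and 2, and the present theorem is essentially a clean bookkeeping consequence of them together with the strict monotonicity of the gamma CDF.
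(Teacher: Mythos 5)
Your argument is correct and reaches the same conclusion, but it is more self-contained than the paper's proof and patches two things the paper glosses over. The paper's route is: Lemma~1 ($h$ strictly convex) plus Lemma~2 ($g$ quasi-convex) feed into a cited composition theorem (\cite{stancu2012fractional}, Thm.~2.4.1) yielding strict pseudo-convexity of $f=g\circ h$, whence a unique global minimum at the critical point. You instead upgrade Lemma~2 to strict monotonicity of $g$ (justified, since the gamma density is positive on $(0,\infty)$) and derive pseudo-convexity by hand from $f'=g'(h)\,h'$ with $g'>0$; this is worth doing, because quasi-convexity of the outer function alone does not in general make the composition pseudo-convex (take $g(\omega)=-\omega$, which is quasi-convex, composed with $\omega^2$) --- the monotonicity is the hypothesis actually doing the work, and your version makes that explicit rather than leaving it buried in the citation. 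You also supply something the paper omits entirely: an existence argument. The paper's proof only shows that \emph{if} a critical point exists in $(0,P_{max})$ it is the unique global minimum; your boundary analysis ($h\to\infty$ as $\omega\to 0^+$ because $2^{kc}-1>0$, and as $\omega\to\infty$ by exponential growth) together with strict convexity pins down a unique interior stationary point $\omega^*$ of $h$, and your closing caveat that one must additionally assume $\omega^*<P_{max}$ is a genuine hypothesis the theorem statement silently relies on. In short: same skeleton, but your version is more elementary, closes the existence gap, and makes the implicit feasibility assumption visible.
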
  
\begin{proof}
	
	From lemma 1 and lemma 2 we note that $f(\omega)$ is composite function with $h(\omega)$ strictly convex and $g(\omega)$ quasi-convex. Thus according to (\cite{stancu2012fractional}, theorem 2.4.1) it is a strict pseudo-convex function of $\omega$. Hence, by virtue of pseudo-convexity, the objective function in (25) has a unique global minimum at its critical point.
\end{proof}
Since, we have proved that the objective function is strict pseudo-convex, the optimization problem can be solved by convex optimization techniques. In particular, since the function is twice differentiable, we solve the KKT conditions \cite{boyd2004convex} by the method of sequential quadratic programming \cite{nocedal2006numerical}.
\section{Results and discussion}
In this section, we present the numerical results to verify the theoretical framework discussed in section III and IV.\\ Figure 2 illustrates the OP of EE as a function of transmit power (dBm) for both LOS and Non-LOS case, with $N=4$. We observe that the OP of EE decreases with the transmit power until it achieves a global minimum, and then, tends to increases again. Therefore, we infer that a continuous increment in the transmit power does not guarantee an improved system performance, unlike rate OP, as shown in Figure 3.
\begin{figure}[h]
	\centering
	\includegraphics[width=.78\columnwidth]{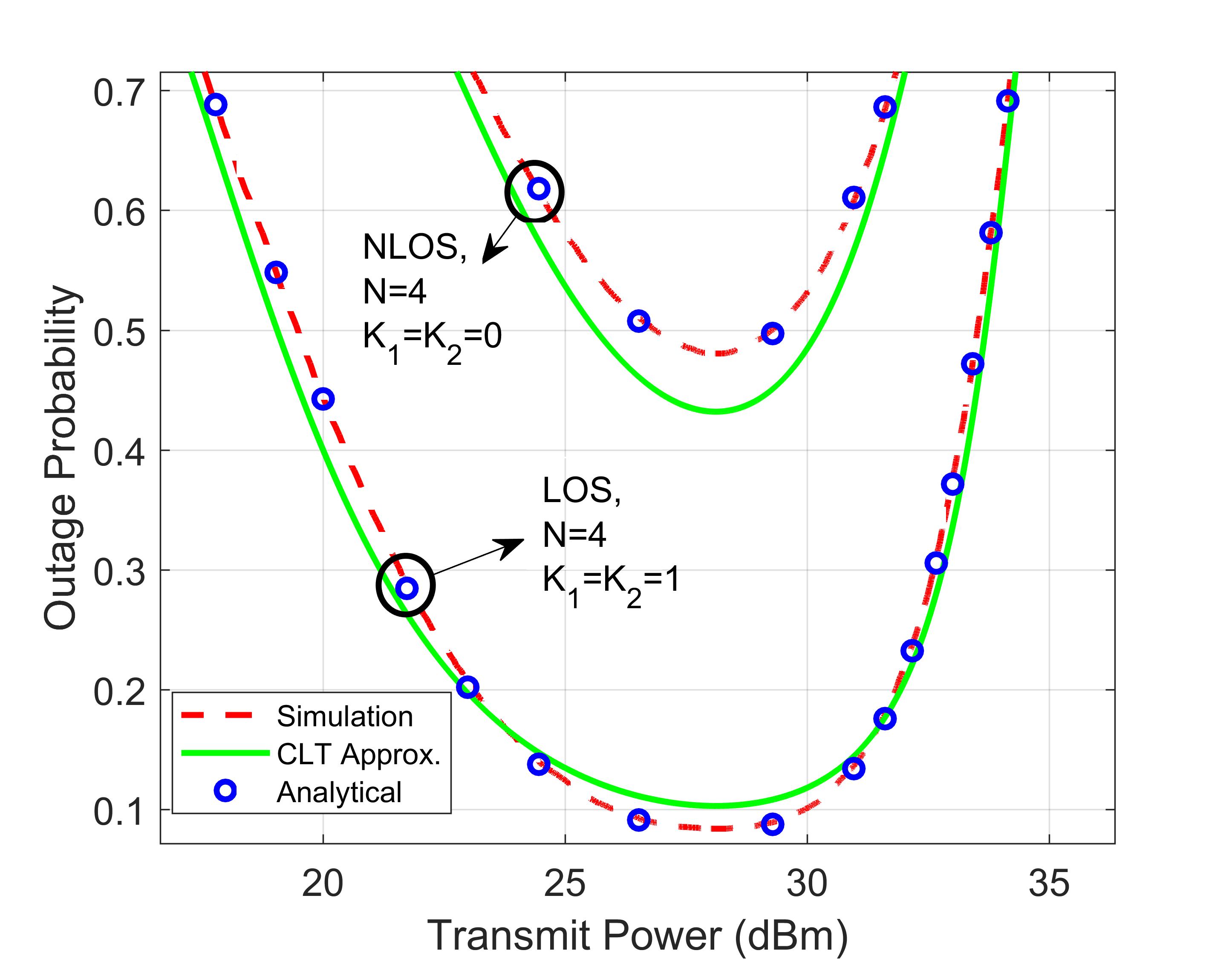}
	\caption{$P_{out}(\eta_{EE})$ vs $p$ for LOS and Non-LOS} 
\end{figure}
\begin{figure}[h]
	\centering
	\includegraphics[width=.78\columnwidth]{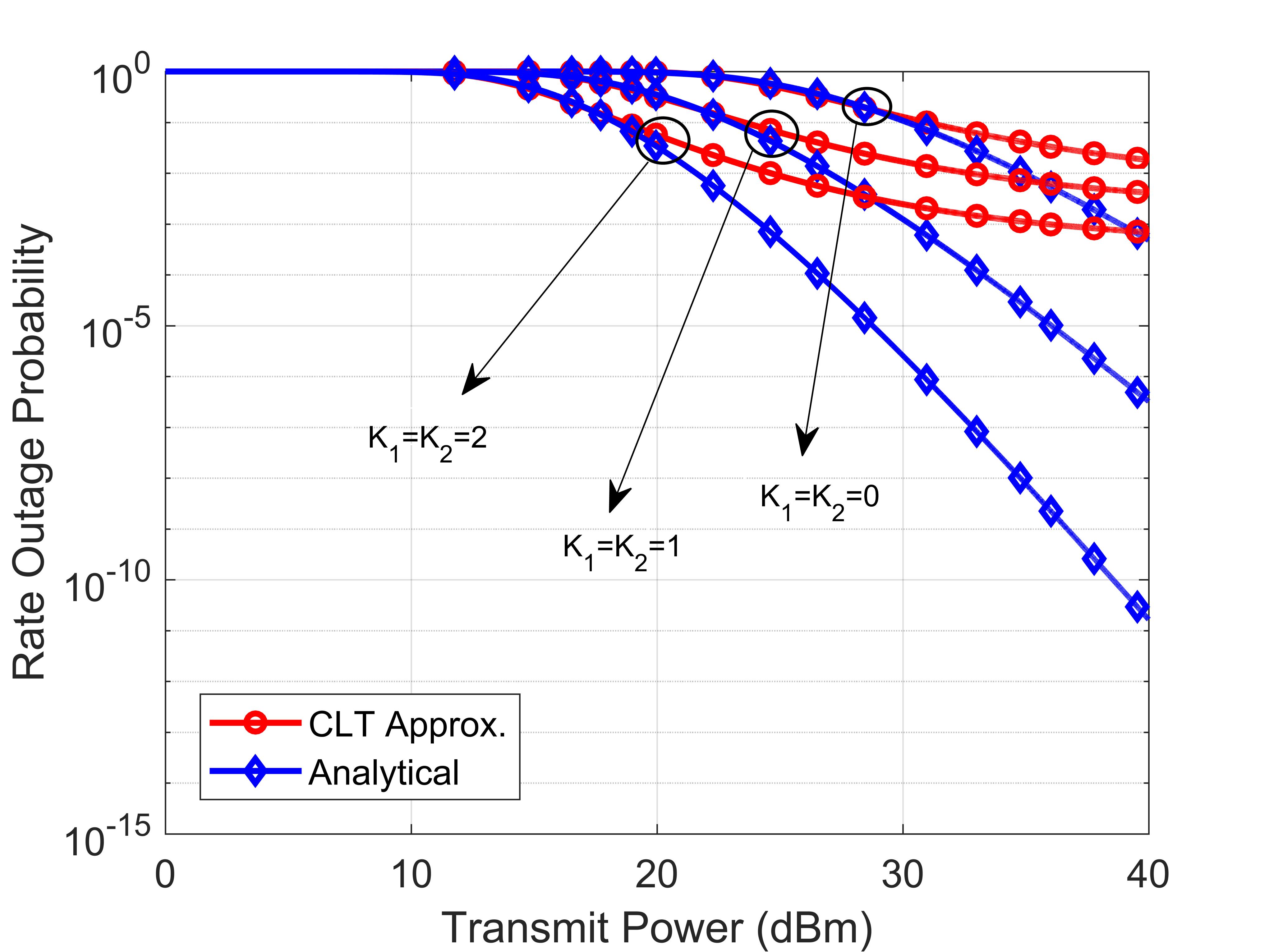}
	\caption{Rate OP as a function of transmit power $p$} 
\end{figure}
 In Figure 2, the analytical results and CLT approximations are compared with the Monte-Carlo simulations. Clearly, the results corresponding to the analytical expressions match with the simulation results. To this end, it is also evident in Figure 2 that the CLT approximations are not very accurate for smaller values of $N$. However, as given by the \textit{Berry-Esseen} inequality, the analytical expressions converge to CLT approximations for higher values of $N$, with the rate of convergence being proportional to $\frac{1}{\sqrt{N}}$. This is illustrated in Figure 4, wherein, the approximation error is plotted as a function of $N$, for $\eta_{th}=2$ Bits/Hz/J. Further, as expected, when the LOS paths exist in-between the three nodes, the system performs significantly better. The impact of LOS components is quantified in terms of the Rician factors $(K_1, K_2)$, as shown in Figure 5. The Rician parameter $K_i$, $i\in {1,2}$, is the ratio of power in the LOS component and Non-LOS components. Therefore, a higher value of $K_i$ implies the reduced channel randomnesses and hence leads to an improvement in the system performance. Moreover, the OP significantly relies on the number of IRS elements $(N)$. In Figure 6, the OP of EE is plotted w.r.t $N$ for increasing values of $K_i$. We observe that the value sharply decreases with the increasing number of IRS elements and the effect is more significant in the case of dominant LOS-component scenarios. In fact, increasing the value of $N$ reduces the transmit power required to achieve a given OP.  From the results we also infer that a fairly small value of OP is achieved even with a finite number of IRS elements and the assumption of infinitely large number of IRS elements is not necessary. 

\begin{figure}[h]
	\centering
	\includegraphics[width=.78\columnwidth]{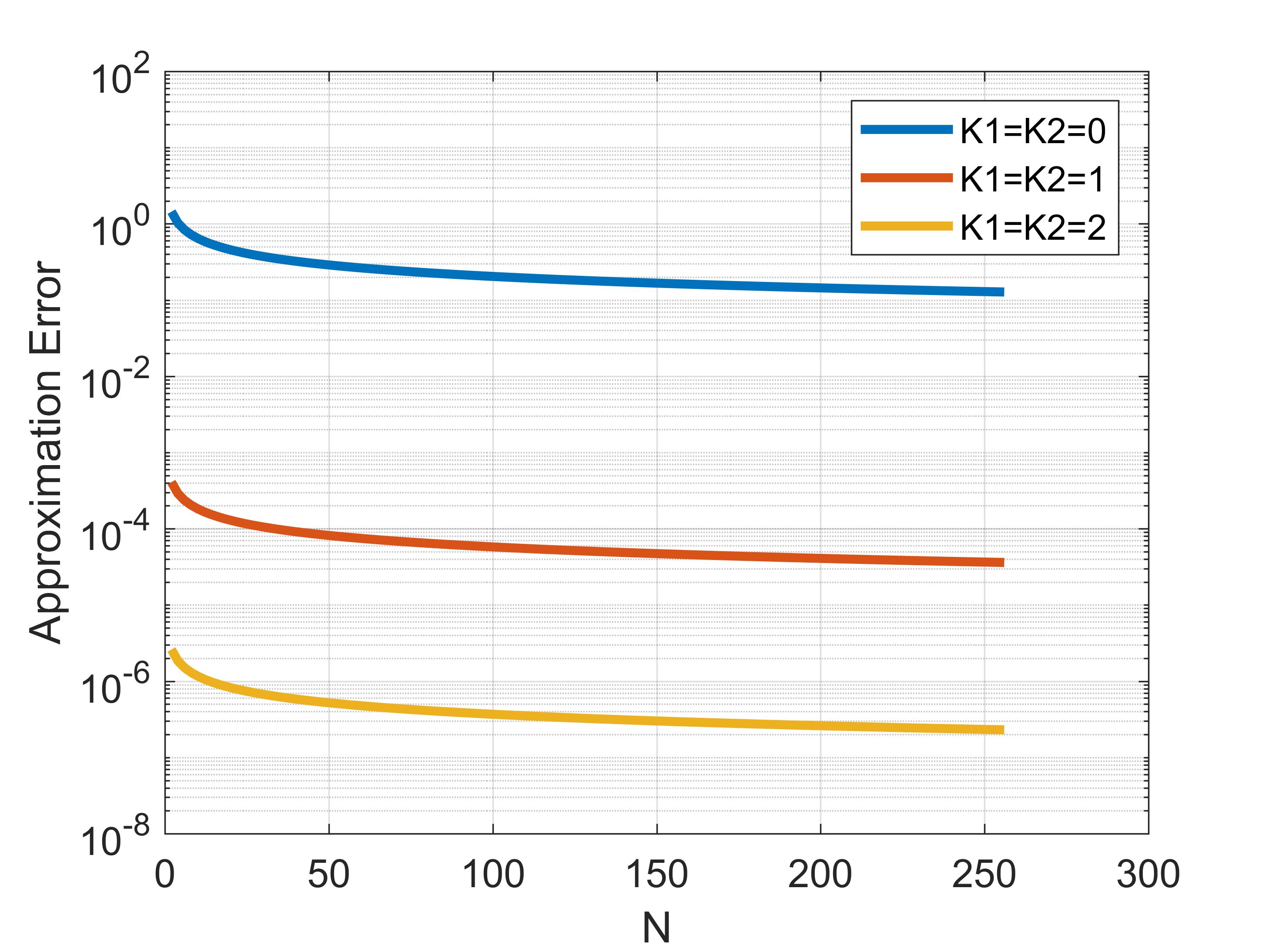}
	\caption{Approximation Error vs $N$ for $\eta_{th}=2$ bits/Hz/J}
\end{figure}
\begin{figure}[h]
	\centering
	\includegraphics[width=.78\columnwidth]{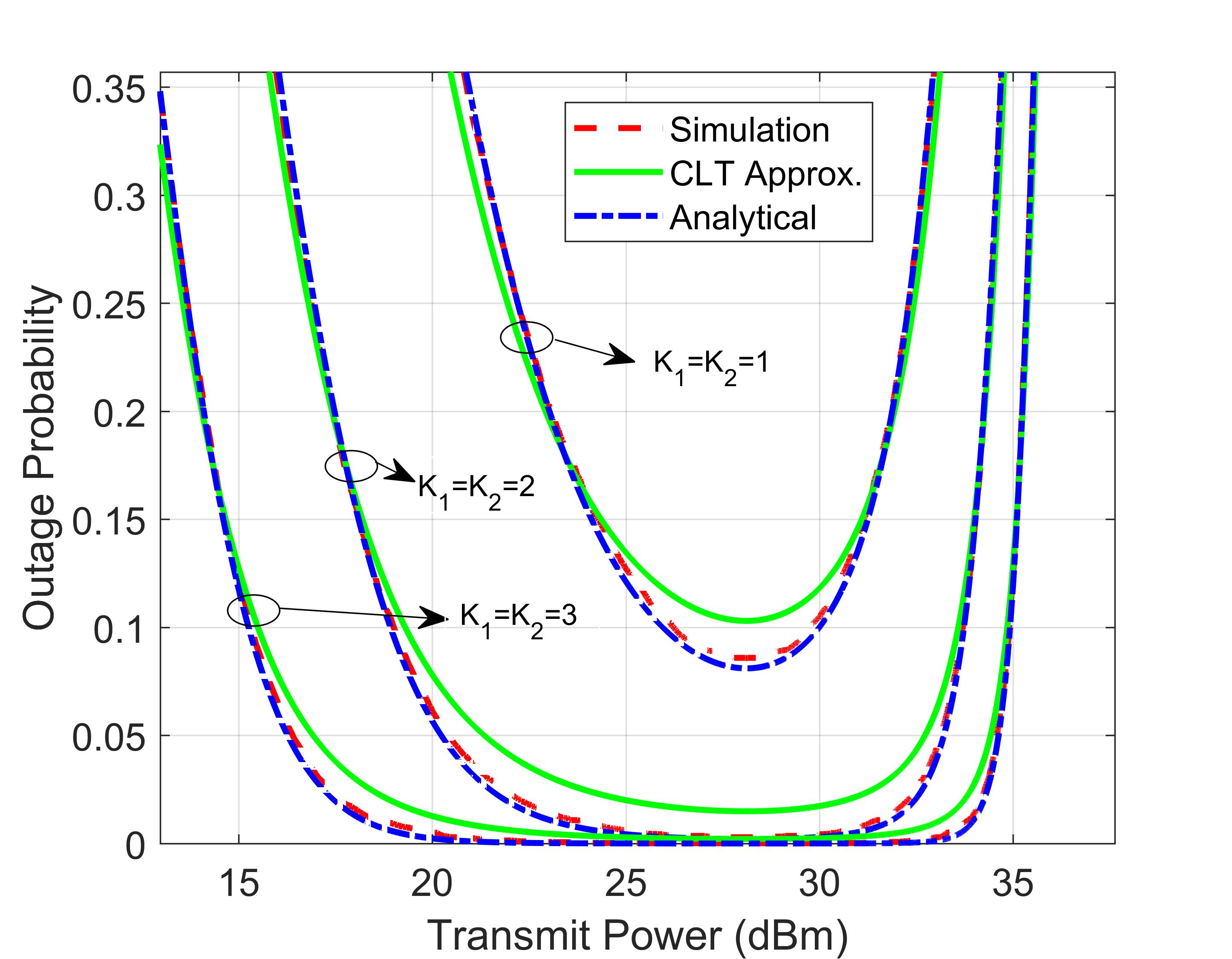}
	\caption{$P_{out}(\eta_{EE})$ vs $p$ for different values of $K_1$, $K_2$}
\end{figure}
\begin{figure}[h]
	\centering
	\includegraphics[width=.78\columnwidth]{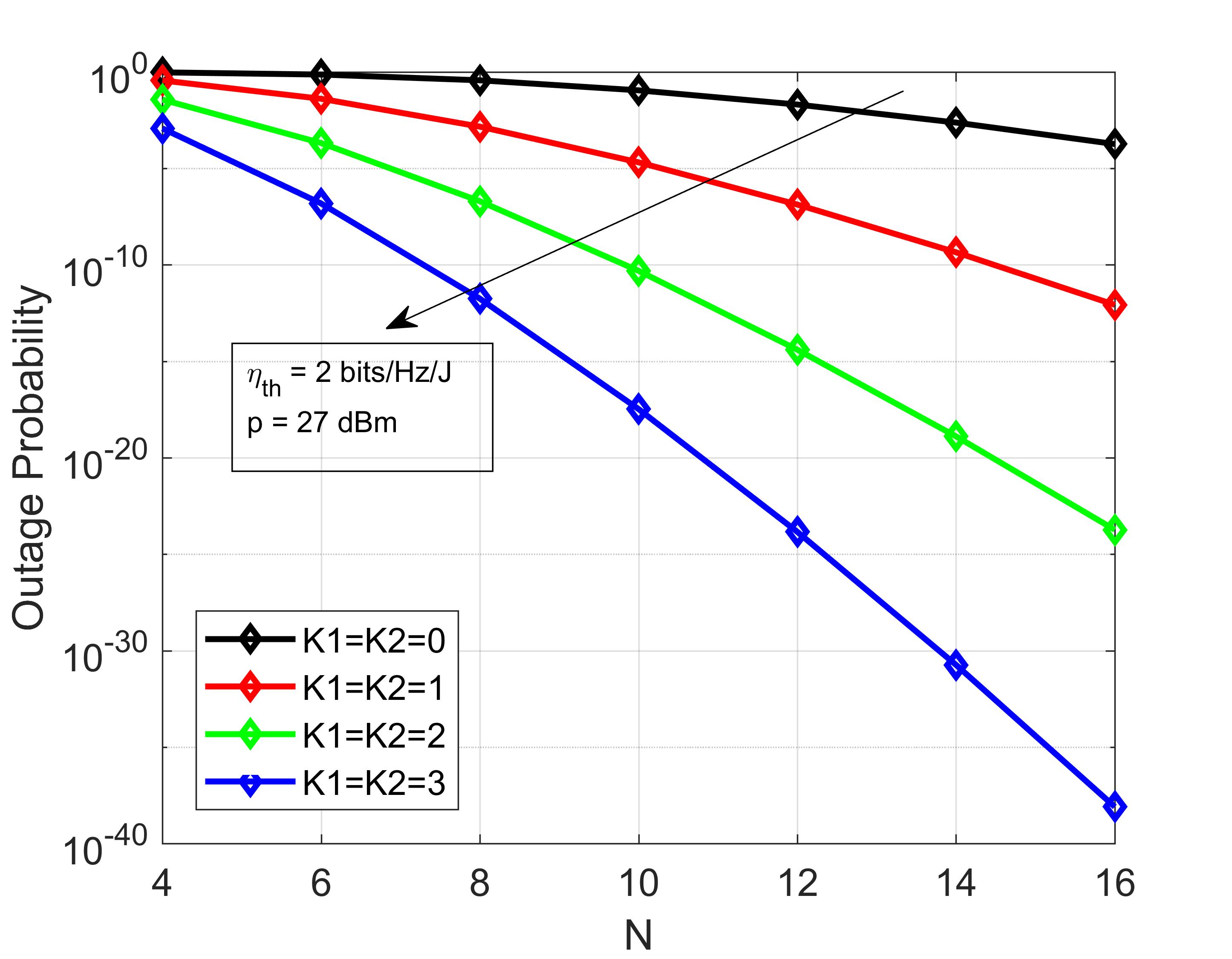}
	\caption{$P_{out}(\eta_{EE})$ vs $N$ for $p=27$ dBm and $\eta_{th}=2$ bits/Hz/J}
\end{figure}
\begin{table}[h!]
	\centering
	\caption{Required IRS elements in LOS and non-LOS scenario}
	\begin{tabular}{||c c c c||} 
		\hline
		Optimal Value of OP & $K_i=1$ & $K_i=0$ & $(p)$ \\ [0.5ex] 
		\hline\hline
		9.4$\times$$10^{-06}$ & $N=8$ & $N=14$ & 28 dBm \\
		3.6$\times$$10^{-11}$ & $N=12$ & $N=20$ & 28 dBm \\ 
		9.1$\times$$10^{-18}$ & $N=16$ & $N=26$ & 28 dBm \\
		2.7$\times$$10^{-25}$ & $N=20$ & $N=32$ & 28 dBm \\ [1ex]
		\hline
	\end{tabular}
\end{table}
Lastly, as indicated in Table 1, the number of IRS elements required to achieve a given minimum OP, in LOS and non-LOS scenarios, are approximately in the ratio of 5:8. Thus, in absence of LOS paths, the IRS, with nearly twice the number of elements, can compensate for the fading effects in an energy efficient manner.
\section{Conclusion}
In this paper, the EE of a finite-element-IRS assisted communication system was studied. With no CSI at the transmitter, we defined the OP of EE as a metric to characterize the system performance. For the LOS and Non-LOS scenarios, we obtained the closed-form solutions for the OP of EE and also quantified the rate of convergence of the derived expressions using the \textit{Berry-Esseen} inequaliy. It was shown that the rate of convergence is proportional to $\frac{1}{\sqrt{N}}$. Further, we proved that the OP of EE is a strict-pseudoconvex function of transmit power and has unique global minimum over $p\in (0,P_{max})$. To obtain the optimal power, we formulated  a constrained optimization problem and solved the KKT conditions. The numerical results illustrated the impact of $K$ and $N$ on the OP of EE. It was observed that the IRS can  efficiently compensate for the fading effects in non-LOS channels. The analytical results were verified using the Monte-Carlo simulations.

\bibliographystyle{IEEEtran}
\bibliography{Outage_IRS_EE}

\end{document}